\documentclass[conference]{IEEEtran}
\IEEEoverridecommandlockouts

\usepackage{amsmath,amsfonts}
\usepackage{array}
\usepackage{textcomp}
\usepackage{stfloats}
\usepackage{verbatim}
\usepackage{graphicx}
\usepackage{cite}
\usepackage{booktabs}
\usepackage{diagbox}
\allowdisplaybreaks[4]

\usepackage{algorithm,algorithmic}
\usepackage{subfigure}
\usepackage{float}
\usepackage{amssymb,amsfonts}
\usepackage{amsmath}
\usepackage{amsthm}
\theoremstyle{definition}

\newtheorem{proposition}{Proposition}

\newtheorem{remark}{Remark}

\usepackage{color}
\usepackage{graphicx}
\usepackage{multirow}
\usepackage{epsfig}
\usepackage{epstopdf} 
\usepackage{extarrows}
\usepackage{tikz}
\usetikzlibrary{arrows.meta}
\usepackage{setspace}
\usepackage{colortbl}

\begin{document}

\title{
Joint Channel Sounding and Source-Channel Coding \\for MIMO-OFDM Systems: Deep Unified Encoding and Parallel Flow-Matching Decoding
}

\author{\IEEEauthorblockN{Hao Jiang\IEEEauthorrefmark{1}, Xiaojun Yuan\IEEEauthorrefmark{1}, and Qinghua Guo\IEEEauthorrefmark{2}}
\IEEEauthorblockA{\IEEEauthorrefmark{1}{National Key Laboratory of Wireless Communications, Univ. of Elec. Sci. and Tech. of China, Chengdu, China}}
\IEEEauthorblockA{\IEEEauthorrefmark{2}{
School of Engineering, Univ. of Wollongong, Wollongong, Australia}\\
Email: jh@std.uestc.edu.cn,
  xjyuan@uestc.edu.cn,
  qguo@uow.edu.au}
}

\maketitle

\begin{abstract}
In this work, we propose a deep unified (DU) encoder that embeds source information in a codeword that contains sufficient redundancy to handle both channel and source uncertainties, without enforcing an explicit pilot-data separation. At the receiver, we design a parallel flow-matching (PFM) decoder that leverages flow-based generative priors to jointly estimate the channel and the source, yielding much more efficient inference than the existing diffusion-based approaches. To benchmark performance limits, we derive the Bayesian Cram\'er-Rao bound (BCRB) for the joint channel and source estimation problem. Extensive simulations over block-fading MIMO-OFDM channels demonstrate that the proposed DU-PFM approach drastically outperforms the state-of-the-art methods in both channel estimation accuracy and source reconstruction quality.
\end{abstract}



\section{Introduction}
\label{sec:introduction}

In semantic communication, DNN-based joint source-channel coding (DJSCC) \cite{wu2023fusion, wu2024deep, gunduz2024joint, yang2024swinjscc, xu2025tdjscc} has become a prevailing approach to the design of neural codecs for multi-modal source transmission, in which source data (e.g., audio, images, and video) are directly mapped into channel inputs. This approach has proven successful in various communication systems, encompassing multi-user transmission \cite{wu2023fusion}, multiple-input multiple-output (MIMO) systems \cite{wu2024deep}, and orthogonal frequency-division multiplexing (OFDM) \cite{xu2025tdjscc}, consistently exhibiting significantly improved spectrum efficiency. Yet, most existing work on DJSCC either assumes that the channel is \textit{a priori} known, or takes some conventional channel sounding techniques, such as sending orthogonal pilots at the transmitter and using canonical channel estimation methods at the receiver \cite{wu2024deep, yang2024swinjscc}. Then, a natural question arises: How much overhead do we need to pay for the acquisition of channel state information (CSI) in a DJSCC-based communication system?

In this paper, we attempt to answer this question by integrating DJSCC into MIMO-OFDM systems defined in modern standards such as 3GPP 5G-NR \cite{3gpp.38.201} and WiFi-7 \cite{deng2020ieee}. In a practical MIMO-OFDM system, the channel typically varies fast over frequency due to the multipath effect and over time due to user mobility, which implies that the amount of time-frequency resource required in CSI acquisition is unignorable. In current commercial MIMO-OFDM systems, approximately 15\%-20\% of the time-frequency resources are dedicated to pilot transmission for channel sounding \cite{tr385g, rusek2012scaling}. In fact, pilot overhead can be reduced by exploiting the inherent low-dimensionality of the MIMO-OFDM channel \cite{kuai2019structured}. 
The question is then what is the best design practice that can resolve the channel uncertainty at the receiver by introducing a minimal amount of redundancy at the transmitter. From a more general perspective, the receiver needs to resolve both channel and source uncertainties based on the transmission redundancy introduced at the transmitter side. This implies that channel sounding and source-channel coding can be unified to enable a more efficient allocation of resources, thereby achieving a higher spectral efficiency of the system.

Building upon this insight, we take a joint channel sounding and source-channel coding approach to reduce CSI acquisition overhead and improve spectral efficiency by exploiting the powerful encoding and decoding capabilities of DNN-based techniques. 
Specifically, we propose a deep unified (DU) encoder at the transmitter that maps source data into channel input symbols via a DNN. It is worth noting that a codeword of the DU encoder does not necessarily exhibit a clear structural separation between the redundant component and the information-bearing component, although conventional designs with explicit redundancy, i.e., pilots \cite{mashhadi2021pruning, han2025interference}, are included as special cases of our proposed DU design. 
At the receiver, we propose a powerful parallel flow-matching (PFM) decoder to jointly estimate the channel and source. In PFM, we leverage the flow model \cite{lipman2023flow}, which effectively characterizes high-dimensional data posing low-dimensional structures, to capture the prior information of the channel and the source data.
This allows the decoder to perform  Bayesian inference for joint channel estimation and source reconstruction.
Simulations demonstrate that the proposed DU-PFM approach drastically outperforms existing representative methods in terms of spectral efficiency, channel estimation accuracy, and perceptual quality of recovered source data.
The contributions of this work are summarized as follows:
\begin{itemize}
    \item To the best of our knowledge, this work presents the first systematic study of the joint channel sounding and source-channel coding problem under limited time-frequency resources in MIMO-OFDM scenarios.
    \item We propose a deep unified encoder that implicitly introduces redundancy for channel sounding and embeds source information into channel input symbols, together with a powerful parallel flow-matching decoder for simultaneous channel estimation and source reconstruction.

    \item We bridge generative learning and estimation theory to derive the BCRB for joint channel estimation and source recovery, where the learned flow models are used to approximate the prior Fisher information matrices of the channel and transmit signal. 
    
    \item Through extensive simulations under 3GPP UMa-NLOS channel conditions, DU-PFM exhibits markedly superior channel estimation accuracy and source reconstruction quality relative to the state-of-the-art approaches.
\end{itemize}

\section{System Model and Problem Formulation}\label{Section1}

\subsection{System Model}
We consider a block-fading multi-user MIMO-OFDM system, where $N_t$ single-antenna transmitters communicate with a receiver equipped with $N_r$ antennas over $N_f$ adjacent subcarriers. The channel is assumed to remain constant over a transmission block of $T$ OFDM symbols and vary independently over blocks.
Let $h_{f,k,n} \in \mathbb{C}$ denote the frequency-space-domain channel coefficient from the $k$-th transmitter to the $n$-th receive antenna on the $f$-th subcarrier. 
The uplink channel can be denoted as a tensor
$\boldsymbol{\mathcal{H}} \in \mathbb{C}^{N_f\times N_t\times N_r}$ with the $(f,k,n)$-th element being $h_{f,k,n}$.
We next define the transmit signal in a transmission block as a tensor $\boldsymbol{\mathcal{X}} \in \mathbb{C}^{N_f\times T\times N_t}$, where the $(f,t,k)$-th element, i.e., $x_{f,t,k}$, denotes the symbol transmitted by the $k$-th transmitter on the $f$-th subcarrier during the $t$-th OFDM symbol. Each transmit antenna is subject to an average power constraint $P$ over the $N_fT$ transmit symbols.
Assume that the cyclic prefix (CP) is longer than the channel delay spread. Then, after CP removal, the signal received by the $n$-th receive antenna on the $f$-th subcarrier during the $t$-th OFDM symbol is given by
$y_{f,t,n}=\sum\nolimits^{N_t}_{k=1} x_{f,t,k} h_{f,k,n} + w_{f,t,n}$,
where $w_{f,t,n}\sim \mathcal{CN}(0,\sigma^2_n)$ denotes the $(f,t,n)$-th element of the additive white Gaussian noise $\boldsymbol{\mathcal{W}}\in \mathbb{C}^{N_f \times T\times N_r}$. For subcarrier $f$, the receive signal over a transmission block is given by 
\begin{align}\label{systemmodel0}
    \boldsymbol{\mathcal{Y}}_{f,:,:}=\boldsymbol{\mathcal{X}}_{f,:,:} \boldsymbol{\mathcal{H}}_{f,:,:} + \boldsymbol{\mathcal{W}}_{f,:,:} \in \mathbb{C}^{T\times N_r},
\end{align}
where $\boldsymbol{\mathcal{Y}} \in \mathbb{C}^{N_f \times T\times N_r}$ is the receive signal tensor.


We propose a DU encoder to learn a compact representation that implicitly incorporates redundancy to combat channel uncertainty while preserving source information. Let $\boldsymbol{\mathcal{S}}_{k}$ be the source data to be transmitted by the $k$-th transmitter. The DU encoder of the $k$-th transmitter, denoted as $f_{\boldsymbol{\gamma}_k}$ parameterized by $\boldsymbol{\gamma}_k$,
maps the source data $\boldsymbol{\mathcal{S}}_{k}$ into a codeword for transmission as
\begin{equation}\label{eq:encoder_PF}
    \boldsymbol{\mathcal{X}}_{:,:,k} = f_{\boldsymbol{\gamma}_k}(\boldsymbol{\mathcal{S}}_{k})\in\mathbb{C}^{N_f\times T_{\mathcal{S}}},
\end{equation}
where $T_{\mathcal{S}}$ denotes the total number of OFDM symbols needed for transmitting $\boldsymbol{\mathcal{S}}_{k}$. Existing schemes utilizing explicit redundancy, i.e., pilot symbols \cite{mashhadi2021pruning, han2025interference}, to combat channel uncertainty can be incorporated into the proposed DU design as special cases.
At the receiver side, we propose a DNN-based decoder $g_{\boldsymbol{\varTheta}}$, parameterized by $\boldsymbol{\varTheta}$, which jointly estimates both the channel and the source data from the receive signal as
\begin{align}\label{receiver}
    \{\hat{\boldsymbol{\mathcal{H}}}, {\{\hat{\boldsymbol{\mathcal{S}}}_k\}^{N_t}_{k=1}}\} = g_{\boldsymbol{\varTheta}}\left(\boldsymbol{\mathcal{Y}}\right).
\end{align}

\subsection{Problem Description}
To enable joint channel and source estimation, our system design follows the mutual information maximization principle \cite{tse2005fundamentals, cai2025end}. Specifically, we aim to maximize the mutual information between $\{\boldsymbol{\mathcal{H}}, {\{\boldsymbol{\mathcal{S}}_k\}^{N_t}_{k=1}}\}$ and $\boldsymbol{\mathcal{Y}}$, i.e.,
$I(\boldsymbol{\mathcal{H}},{\{\boldsymbol{\mathcal{S}}_k\}^{N_t}_{k=1}};\boldsymbol{\mathcal{Y}})=H(\boldsymbol{\mathcal{H}},{\{\boldsymbol{\mathcal{S}}_k\}^{N_t}_{k=1}}) - H(\boldsymbol{\mathcal{H}},{\{\boldsymbol{\mathcal{S}}_k\}^{N_t}_{k=1}}\vert\boldsymbol{\mathcal{Y}})$,
where $H(\boldsymbol{\mathcal{H}}, {\{\boldsymbol{\mathcal{S}}_k\}^{N_t}_{k=1}})$ remains constant for a given system. Therefore, the maximization of the mutual information is equivalent to the minimization of 
$H(\boldsymbol{\mathcal{H}},{\{\boldsymbol{\mathcal{S}}_k\}^{N_t}_{k=1}}\vert  \boldsymbol{\mathcal{Y}}) = \mathbb{E}_{p(\boldsymbol{\mathcal{Y}}, \boldsymbol{\mathcal{H}}, {\{\boldsymbol{\mathcal{S}}_k\}^{N_t}_{k=1}})}[-\ln p(\boldsymbol{\mathcal{H}},{\{\boldsymbol{\mathcal{S}}_k\}^{N_t}_{k=1}}\vert  \boldsymbol{\mathcal{Y}})]$, in which the posterior distribution 
is generally intractable due to the unknown prior distributions of the channel and the source data. To avoid this problem, a widely used approach is to replace the posterior distribution with a variational distribution $q_{\boldsymbol{\varTheta}}(\boldsymbol{\mathcal{H}},{\{\boldsymbol{\mathcal{S}}_k\}^{N_t}_{k=1}}\vert\boldsymbol{\mathcal{Y}})$ parameterized by $\boldsymbol{\varTheta}$ \cite{cai2025end, jiang2025blind}. 
This yields the variational upper bound of the conditional entropy, i.e., $\mathbb{E}_{p(\boldsymbol{\mathcal{Y}}, \boldsymbol{\mathcal{H}}, {\{\boldsymbol{\mathcal{S}}_k\}^{N_t}_{k=1}})}[-\ln q_{\boldsymbol{\varTheta}}(\boldsymbol{\mathcal{H}},{\{\boldsymbol{\mathcal{S}}_k\}^{N_t}_{k=1}}\vert  \boldsymbol{\mathcal{Y}})]$.
Thus, the mutual information maximization is relaxed to the minimization of the variational upper bound as
\begin{subequations}\label{optimizeP1}
\begin{align}
    \min_{\boldsymbol{\varGamma}, \boldsymbol{\varTheta}}& \,\,\mathbb{E}_
    {p(\boldsymbol{\mathcal{Y}},\boldsymbol{\mathcal{H}},{\{\boldsymbol{\mathcal{S}}_k\}^{N_t}_{k=1}})}
    [-\ln q_{\boldsymbol{\varTheta}}(\boldsymbol{\mathcal{H}},{\{\boldsymbol{\mathcal{S}}_k\}^{N_t}_{k=1}}\vert \boldsymbol{\mathcal{Y}})],\\
    \text{s. t.}& \,\, \|  f_{\boldsymbol{\gamma}_k}(\boldsymbol{\mathcal{S}}_{k})\|^2_F \leq N_fT_{\mathcal{S}}P, \,\,\forall k,\label{constraintX}
\end{align}
\end{subequations}
where $\boldsymbol{\varGamma}=[\boldsymbol{\gamma}_1,\cdots,\boldsymbol{\gamma}_{N_t}]$ and $\boldsymbol{\varTheta}$ are the parameters of 
the encoders and the decoder, respectively.
As the encoders and the decoder are trained jointly in \eqref{optimizeP1}, the channel uncertainty and the source uncertainty are coupled, resulting in an ill-conditioned nonconvex optimization problem. To avoid this, we employ a separate transceiver design, in which the codec is trained separately, as elaborated in the following subsections.

\subsection{Transceiver Design}
An illustration of the proposed transceiver design is provided in Fig.~\ref{fig:framework}, as detailed below.
\begin{figure}[htb]
    \centering
    \includegraphics[width=0.99\linewidth]{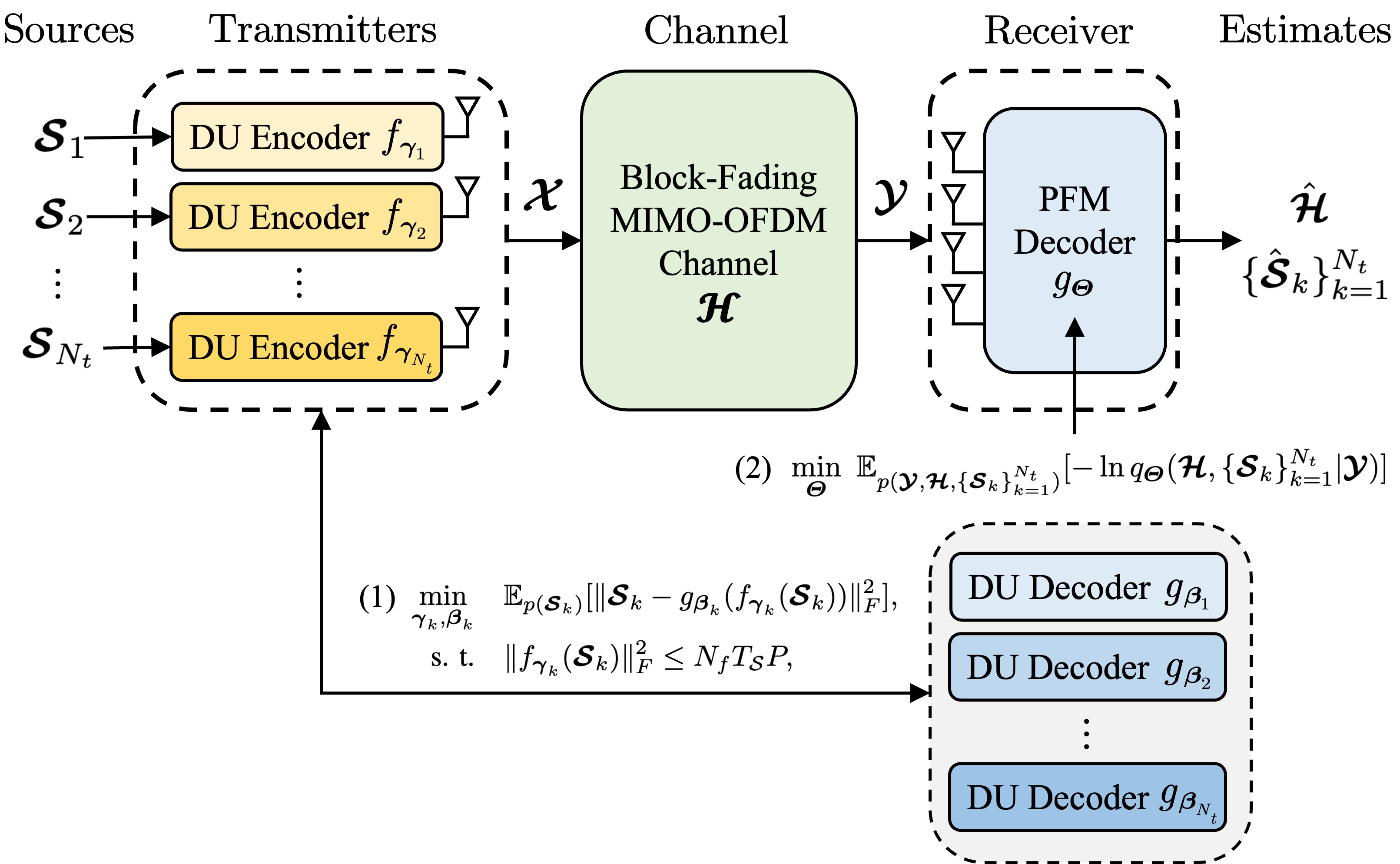}
    \caption{The proposed transceiver design, where $\{g_{\boldsymbol{\beta}_k}\}_{k=1}^{N_t}$ are the auxiliary DU decoders to facilitate the training of the DU encoders $\{f_{\boldsymbol{\gamma}_k}\}_{k=1}^{N_t}$, and $g_{\boldsymbol{\varTheta}}$ is the desired PFM decoder that achieves joint channel and source estimation.}
    \label{fig:framework}
\end{figure}
\subsubsection{Transmitter Design}
For the DU encoder design, as the channel is assumed to be unknown to the transmitters, we adopt a training approach that does not require the channel. Specifically, for any $k\in[1,\cdots,N_t]$, we train a DU codec pair $\{f_{\boldsymbol{\gamma}_k}, g_{\boldsymbol{\beta}_k}\}$ via the following optimization problem:
\begin{subequations}\label{eq:DUcodec}
\begin{align}
    \min_{\boldsymbol{\gamma}_k, \boldsymbol{\beta}_k}& \quad \mathbb{E}_{p(\boldsymbol{\mathcal{S}}_{k})}[\| \boldsymbol{\mathcal{S}}_{k} - g_{\boldsymbol{\beta}_k}(f_{\boldsymbol{\gamma}_k}(\boldsymbol{\mathcal{S}}_{k})) \|^2_F],\\ \text{s. t.}& \quad \| f_{\boldsymbol{\gamma}_k}(\boldsymbol{\mathcal{S}}_{k}) \|^2_F \leq N_fT_{\mathcal{S}}P,
\end{align}
\end{subequations}
where $g_{\boldsymbol{\beta}_k}$ is the DU decoder corresponding to the DU encoder $f_{\boldsymbol{\gamma}_k}$, and the constraint is the power budget in \eqref{optimizeP1}. 
Notably, $\{g_{\boldsymbol{\beta}_k}\}_{k=1}^{N_t}$ 
serve solely as auxiliary decoders to facilitate the training of DU encoders.
We adopt the Swin Transformer \cite{liu2021swin, yang2024swinjscc} as the backbone of our DU codec. 

\subsubsection{Receiver Design}
With the DU encoders $\{f_{\boldsymbol{\gamma}_k}\}^{N_t}_{k=1}$ fixed, the optimization problem in \eqref{optimizeP1} reduces to \begin{align}\label{optimizeP4}
    \min_{\boldsymbol{\varTheta}} \,\, \mathbb{E}_{p(\boldsymbol{\mathcal{Y}},\boldsymbol{\mathcal{H}},{\{\boldsymbol{\mathcal{S}}_k\}^{N_t}_{k=1}})}
    [-\ln q_{\boldsymbol{\varTheta}}(\boldsymbol{\mathcal{H}},{\{\boldsymbol{\mathcal{S}}_k\}^{N_t}_{k=1}}\vert \boldsymbol{\mathcal{Y}})].
\end{align}  
We factorize the variational posterior distribution as
\begin{align}\label{variationalpost}
    &q_{\boldsymbol{\varTheta}}(\boldsymbol{\mathcal{H}},{\{\boldsymbol{\mathcal{S}}_k\}^{N_t}_{k=1}}\vert \boldsymbol{\mathcal{Y}})\nonumber\\
    \propto{}& 
    p_{\boldsymbol{\varGamma}}(\boldsymbol{\mathcal{Y}}\vert \boldsymbol{\mathcal{H}},{\{\boldsymbol{\mathcal{S}}_k\}^{N_t}_{k=1}})q_{\boldsymbol{\theta}_{\mathcal{H}}}\left(\boldsymbol{\mathcal{H}}\right)\prod\nolimits^{N_t}_{k=1}q_{\boldsymbol{\theta}_{\mathcal{S}_k}}(\boldsymbol{\mathcal{S}}_{k})
\end{align}
where $q_{\boldsymbol{\theta}_{\mathcal{H}}}(\boldsymbol{H})$ and $\{q_{\boldsymbol{\theta}_{\mathcal{S}_k}}(\boldsymbol{\mathcal{S}}_{k})\}_{k=1}^{N_t}$ are learned priors parameterized by $\boldsymbol{\varTheta} = [\boldsymbol{\theta}_{\mathcal{H}}, \boldsymbol{\theta}_{\mathcal{S}_1},\cdots, \boldsymbol{\theta}_{\mathcal{S}_{N_t}}]$. Given that the likelihood $p_{\boldsymbol{\varGamma}}(\boldsymbol{\mathcal{Y}}\vert \boldsymbol{\mathcal{H}},{\{\boldsymbol{\mathcal{S}}_k\}^{N_t}_{k=1}})$ is fixed, the objective in \eqref{optimizeP4} is theoretically minimized when the learned priors converge to the true prior distributions, i.e., $q_{\boldsymbol{\theta}_{\mathcal{H}}}(\boldsymbol{\mathcal{H}}) \approx p(\boldsymbol{\mathcal{H}})$ and $q_{\boldsymbol{\theta}_{\mathcal{S}_k}}(\boldsymbol{\mathcal{S}}_k) \approx p(\boldsymbol{\mathcal{S}}_k)$ for all $k$.
To approximate the true priors,
we adopt the optimal transport (OT) flow-matching framework \cite{lipman2023flow} to model the generative priors. 
Specifically, we construct probability paths $p(\boldsymbol{\mathcal{H}}(\tau))$ and $p(\boldsymbol{\mathcal{S}}_{k}(\tau))$ parameterized by $\tau \in [0, 1]$, where the flow variables $\boldsymbol{\mathcal{H}}(\tau)\sim p(\boldsymbol{\mathcal{H}}(\tau))$ and $\boldsymbol{\mathcal{S}}_{k}(\tau)\sim p(\boldsymbol{\mathcal{S}}_{k}(\tau))$ evolve from pure noise at $\tau=1$ to the ground truth of the channel and the source data at $\tau=0$ along the probability paths. 
The unconditional generation is governed by the flow ODEs $\frac{d \boldsymbol{\mathcal{H}}(\tau)}{d \tau}=
\boldsymbol{\mathcal{V}}(\boldsymbol{\mathcal{H}}(\tau))$ and
$\frac{d \boldsymbol{\mathcal{S}}_k(\tau)}{d \tau}=
\boldsymbol{\mathcal{V}}(\boldsymbol{\mathcal{S}}_k(\tau))$,
where $\boldsymbol{\mathcal{V}}(\boldsymbol{\mathcal{H}}(\tau))$ and $\boldsymbol{\mathcal{V}}(\boldsymbol{\mathcal{S}}_k(\tau))$ are the unconditional prior velocity fields (VFs). 
Leveraging flow-matching techniques \cite{lipman2023flow, kim2025flowdps}, we learn the prior VFs via DNNs, and obtain the learned prior VFs $\boldsymbol{\mathcal{V}}^{\boldsymbol{\theta}_{\mathcal{H}}}$ and $\boldsymbol{\mathcal{V}}^{\boldsymbol{\theta}_{\mathcal{S}_k}}$.

\begin{remark}
Accoding to \cite[Eq. (17)]{kim2025flowdps}, the learned prior VFs are parametrized by the scores of the generative priors as
\begin{subequations}\label{eq:vel_score_relation}
\begin{align}
    \boldsymbol{\mathcal{V}}^{\boldsymbol{\theta}_{\mathcal{H}}}(\boldsymbol{\mathcal{H}}(\tau)) &= \frac{\boldsymbol{\mathcal{H}}(\tau) + \tau\nabla_{\boldsymbol{\mathcal{H}}(\tau)}\log q_{\boldsymbol{\theta}_{\mathcal{H}}}(\boldsymbol{\mathcal{H}}(\tau))}{\tau-1},\\
    \boldsymbol{\mathcal{V}}^{\boldsymbol{\theta}_{\mathcal{S}_k}}(\boldsymbol{\mathcal{S}}_k(\tau)) &= \frac{\boldsymbol{\mathcal{S}}_k(\tau) + \tau\nabla_{\boldsymbol{\mathcal{S}}_k(\tau)}\log q_{\boldsymbol{\theta}_{\mathcal{S}_k}}(\boldsymbol{\mathcal{S}}_k(\tau))}{\tau-1},
\end{align}
\end{subequations}
which bridges the gap between the implicit generative priors $q_{\boldsymbol{\theta}_{\mathcal{H}}}(\boldsymbol{\mathcal{H}})$ , $q_{\boldsymbol{\theta}_{\mathcal{S}_k}}(\boldsymbol{\mathcal{S}}_k)$ and the explicit learned prior VFs.
\end{remark}

To realize channel estimation and source recovery via sampling the variational posterior in \eqref{variationalpost}, we impose $\tau$ into the variational posterior to obtain the posterior path as
\begin{align}\label{eq:post_path_time}
    &q_{\boldsymbol{\varTheta}}(\boldsymbol{\mathcal{H}}(\tau),\{\boldsymbol{\mathcal{S}}_k(\tau)\}^{N_t}_{k=1}\vert \boldsymbol{\mathcal{Y}}) \nonumber\\\propto &p_{\boldsymbol{\varGamma}}(\boldsymbol{\mathcal{Y}}\vert \boldsymbol{\mathcal{H}}(\tau),{\{\boldsymbol{\mathcal{S}}_k(\tau)\}^{N_t}_{k=1}})
    \nonumber\\&\times
    q_{\boldsymbol{\theta}_{\mathcal{H}}}\left(\boldsymbol{\mathcal{H}}(\tau)\right)\prod\nolimits^{N_t}_{k=1}q_{\boldsymbol{\theta}_{\mathcal{S}_k}}(\boldsymbol{\mathcal{S}}_{k}(\tau)),
\end{align}
which is equivalent to the original variational posterior \eqref{variationalpost} at $\tau=0$. As $\boldsymbol{\mathcal{H}}(\tau)$ and $\{\boldsymbol{\mathcal{S}}_{k}(\tau)\}^{N_t}_{k=1}$ evolve from $\tau=1$ to $\tau=0$ along the posterior probability path in \eqref{eq:post_path_time}, we can obtain the posterior samples $\{\hat{\boldsymbol{\mathcal{H}}},{\{\hat{\boldsymbol{\mathcal{S}}}_k\}^{N_t}_{k=1}}\}$.
By substituting the generative priors in \eqref{eq:vel_score_relation} with the posterior probability paths in \eqref{eq:post_path_time}, the posterior VFs are expressed as \cite{kim2025flowdps}
\begin{subequations}\label{eq:post-vel-bayes}
\begin{align}
\boldsymbol{\mathcal{V}}^{\boldsymbol{\theta}_{\mathcal{H}}}&(\boldsymbol{\mathcal{H}}(\tau) \vert \boldsymbol{\mathcal{Y}}
, \{\boldsymbol{\mathcal{S}}_k(\tau)\}^{N_t}_{k=1}
)
=
\boldsymbol{\mathcal{V}}^{\boldsymbol{\theta}_{\mathcal{H}}}(\boldsymbol{\mathcal{H}}(\tau))\nonumber\\
+&
\frac{\tau}{\tau-1} \nabla_{\boldsymbol{\mathcal{H}}(\tau)}\log p_{\boldsymbol{\varGamma}}(\boldsymbol{\mathcal{Y}}\vert\boldsymbol{\mathcal{H}}(\tau), \{\boldsymbol{\mathcal{S}}_k(\tau)\}^{N_t}_{k=1}),\\
\boldsymbol{\mathcal{V}}^{\boldsymbol{\theta}_{\mathcal{S}_k}}&(\boldsymbol{\mathcal{S}}_k(\tau) \vert \boldsymbol{\mathcal{Y}}
,\boldsymbol{\mathcal{H}}(\tau), \{\boldsymbol{\mathcal{S}}_{k^\prime}(\tau)\}_{k^\prime\neq k}^{N_t}
)
=
\boldsymbol{\mathcal{V}}^{\boldsymbol{\theta}_{\mathcal{S}_k}}(\boldsymbol{\mathcal{S}}_k(\tau))\nonumber\\
+&
\frac{\tau}{\tau-1} \nabla_{\boldsymbol{\mathcal{S}}_k(\tau)}\log p_{\boldsymbol{\varGamma}}(\boldsymbol{\mathcal{Y}}\vert\boldsymbol{\mathcal{H}}(\tau), \{\boldsymbol{\mathcal{S}}_k(\tau)\}^{N_t}_{k=1}).
\end{align}
\end{subequations}
Given the posterior VFs, the posterior sampling is equivalent to numerically integrating the flow ODEs with the posterior VFs in \eqref{eq:post-vel-bayes} from $\tau=1$ to $\tau=0$ \cite{lipman2023flow}.
Using the Euler method \cite{strang2000linear} with a reverse step size $\Delta \tau > 0$, the update rules for the reverse integrations are given by
\begin{subequations}\label{eq:ODE_Euler}
\begin{align}
    &\boldsymbol{\mathcal{H}}(\tau-\Delta \tau)
    = \boldsymbol{\mathcal{H}}(\tau) - \Delta \tau\boldsymbol{\mathcal{V}}^{\boldsymbol{\theta}_{\mathcal{H}}}(\boldsymbol{\mathcal{H}}(\tau))\nonumber\\
    &+
    \frac{\tau\Delta\tau}{1-\tau} \nabla_{\boldsymbol{\mathcal{H}}(\tau)}\log p_{\boldsymbol{\varGamma}}(\boldsymbol{\mathcal{Y}}\vert\boldsymbol{\mathcal{H}}(\tau), \{\boldsymbol{\mathcal{S}}_k(\tau)\}^{N_t}_{k=1}),\\
    &\boldsymbol{\mathcal{S}}_k(\tau-\Delta \tau)
    =\boldsymbol{\mathcal{S}}_k(\tau) - \Delta \tau\boldsymbol{\mathcal{V}}^{\boldsymbol{\theta}_{\mathcal{S}_k}}(\boldsymbol{\mathcal{S}}_k(\tau))\nonumber\\
    &+
    \frac{\tau\Delta\tau}{1-\tau} \nabla_{\boldsymbol{\mathcal{S}}_k(\tau)}\log p_{\boldsymbol{\varGamma}}(\boldsymbol{\mathcal{Y}}\vert\boldsymbol{\mathcal{H}}(\tau), \{\boldsymbol{\mathcal{S}}_k(\tau)\}^{N_t}_{k=1}).
\end{align}
\end{subequations}

The complete update rules are summarized in Algorithm~\ref{alg:parallel-flowdps}. Note that directly computing the likelihood scores in \eqref{eq:ODE_Euler} is intractable as the likelihood is conditioned on the flow variables rather than the ground truth.
To approximately calculate the likelihood scores, we leverage Tweedie's formula \cite{chung2023diffusion} to obtain the minimum mean square error (MMSE) estimates of the ground truth based on the flow variables, which are given in step 3 of Alg.~\ref{alg:parallel-flowdps}.
Regarding the MMSE estimates as the ground truth, the likelihood scores can be approximated as in step 4 of Alg.~\ref{alg:parallel-flowdps}, where $\beta_{\mathcal{H}}$ and $\{\beta_{\mathcal{S}_k}\}^{N_t}_{k=1}$ are hyper-parameters used to compensate for approximation errors. 
Substituting the approximated likelihood scores into \eqref{eq:ODE_Euler} gives the final update rules of posterior sampling in step 5 of Alg.~\ref{alg:parallel-flowdps}.
Since all the flow variables share the same likelihood at any $\tau$, their updates can be executed in a fully parallel manner, which ensures high computational efficiency.



\begin{algorithm}[htb]
\caption{Parallel Flow-Matching Decoding}
\label{alg:parallel-flowdps}
\begin{algorithmic}[1]
\REQUIRE Receive signal $\boldsymbol{\mathcal{Y}}$; 
number of iterations $1/\Delta \tau$; 
DU encoders $\{f_{\boldsymbol{\gamma}_k}\}^{N_t}_{k=1}$;
pre-trained prior VFs $\boldsymbol{\mathcal{V}}^{\boldsymbol{\theta}_{\mathcal{H}}}$ and $\{\boldsymbol{\mathcal{V}}^{\boldsymbol{\theta}_{\mathcal{S}_k}}\}^{N_t}_{k=1}$; hyper-parameters $\beta_{\mathcal{H}}$ and $\{\beta_{\mathcal{S}_k}\}^{N_t}_{k=1}$.
\ENSURE $\hat{\boldsymbol{\mathcal{H}}}, \{\hat{\boldsymbol{\mathcal{S}}}_{k}\}_{k=1}^{N_t}$.
\STATE Sample $\boldsymbol{\mathcal{H}}(1)$ and $\{\boldsymbol{\mathcal{S}}_k(1)\}^{N_t}_{k=1}$ from standard Gaussians\\
\FOR{$\tau=1,1-\Delta \tau,\cdots,\Delta \tau$}
    \STATE \textit{MMSE Estimation via Tweedie's Formula:}\\
    $\hat{\boldsymbol{\mathcal{H}}}(0\vert \tau) = \boldsymbol{\mathcal{H}}(\tau) - \tau\boldsymbol{\mathcal{V}}^{\boldsymbol{\theta}_{\mathcal{H}}}(\boldsymbol{\mathcal{H}}(\tau))$\\
    $\hat{\boldsymbol{\mathcal{S}}}_k(0\vert \tau) = \boldsymbol{\mathcal{S}}_k(\tau) - \tau\boldsymbol{\mathcal{V}}^{\boldsymbol{\theta}_{\mathcal{S}_k}}(\boldsymbol{\mathcal{S}}_k(\tau)),\forall k$
    \STATE \textit{Likelihood Score Approximation:}\\
    $\boldsymbol{\mathcal{G}}_{\mathcal{H}} = \nabla_{\hat{\boldsymbol{\mathcal{H}}}(0\vert \tau)}\log p_{\boldsymbol{\varGamma}}(\boldsymbol{\mathcal{Y}}\vert\hat{\boldsymbol{\mathcal{H}}}(0\vert \tau), \{\hat{\boldsymbol{\mathcal{S}}}_k(0\vert \tau)\}^{N_t}_{k=1})$\\
    $\boldsymbol{\mathcal{G}}_{\mathcal{S}_k} = \nabla_{\hat{\boldsymbol{\mathcal{S}}}_k(0\vert \tau)}\log p_{\boldsymbol{\varGamma}}(\boldsymbol{\mathcal{Y}}\vert\hat{\boldsymbol{\mathcal{H}}}(0\vert \tau), \{\hat{\boldsymbol{\mathcal{S}}}_k(0\vert \tau)\}^{N_t}_{k=1})$
    \STATE \textit{Posterior Sampling via Euler Method:}\\
        $\boldsymbol{\mathcal{H}}(\tau-\Delta \tau)
    = \boldsymbol{\mathcal{H}}(\tau) - \boldsymbol{\mathcal{V}}^{\boldsymbol{\theta}_{\mathcal{H}}}(\boldsymbol{\mathcal{H}}(\tau))\Delta\tau+
    \frac{\tau\Delta\tau\beta_{\mathcal{H}}}{1-\tau} \boldsymbol{\mathcal{G}}_{\mathcal{H}}$
    $\boldsymbol{\mathcal{S}}_k(\tau-\Delta \tau) = \boldsymbol{\mathcal{S}}_k(\tau) - \boldsymbol{\mathcal{V}}^{\boldsymbol{\theta}_{\mathcal{S}_k}}(\boldsymbol{\mathcal{S}}_k(\tau))\Delta\tau+
    \frac{\tau\Delta\tau\beta_{\mathcal{S}_k}}{1-\tau}\boldsymbol{\mathcal{G}}_{\mathcal{S}_k}$
\ENDFOR
\RETURN $\hat{\boldsymbol{\mathcal{H}}} = \boldsymbol{\mathcal{H}}(\Delta\tau)$ and $\{\hat{\boldsymbol{\mathcal{S}}}_{k} = \boldsymbol{\mathcal{S}}_{k}(\Delta\tau)\}_{k=1}^{N_t}$
\end{algorithmic}
\end{algorithm}

\section{Performance Analysis}\label{sec:theory}

In this section, we analyze the fundamental limits of the joint estimation problem via BCRB.
We first derive the Fisher information matrix (FIM). Let $\boldsymbol{y} = \operatorname{vec}(\boldsymbol{\mathcal{Y}}) \in \mathbb{C}^{N_f T N_r \times 1}$, $\boldsymbol{h} = \operatorname{vec}(\boldsymbol{\mathcal{H}}) \in \mathbb{C}^{N_f N_t N_r \times 1}$, and $\boldsymbol{x} = \operatorname{vec}(\boldsymbol{\mathcal{X}}) \in \mathbb{C}^{N_f T N_t \times 1}$.
The receive signal is vectorized as $\boldsymbol{y} = (\mathop{\oplus}\nolimits^{N_f}_{f=1} (\mathbf{I}_{N_r} \otimes \boldsymbol{\mathcal{X}}_{f,:,:}))\boldsymbol{h} + \boldsymbol{n}$, where $\oplus$ denotes the direct sum of matrices,\footnote{The direct sum of matrices is a block diagonal matrix expressed as $\mathop{\oplus}\nolimits^{N_f}_{f=1} \boldsymbol{A}_f=\operatorname{Diag}[\boldsymbol{A}_1,\ldots,\boldsymbol{A}_{N_f}]$.} 
and $\otimes$ is the Kronecker product.
The likelihood $p(\boldsymbol{y} \vert \boldsymbol{x}, \boldsymbol{h})$ is a Gaussian distribution.
The FIM is defined using Wirtinger derivatives of the log-likelihood. The resulting FIM is given by
\begin{align}\label{eq:FOP_FSP}
&\boldsymbol{F}(\boldsymbol{x}, \boldsymbol{h}) = 2/\sigma_n^2\times\nonumber \\&\begin{bmatrix} \mathop{\oplus}^{N_f}_{f=1}( (\boldsymbol{\mathcal{H}}^*_{f,:,:} \boldsymbol{\mathcal{H}}^\top_{f,:,:}) \otimes \mathbf{I}_{T}) & \mathop{\oplus}^{N_f}_{f=1}(\boldsymbol{\mathcal{H}}^*_{f,:,:} \otimes \boldsymbol{\mathcal{X}}_{f,:,:}) \\ \mathop{\oplus}^{N_f}_{f=1}(\boldsymbol{\mathcal{H}}^\top_{f,:,:} \otimes \boldsymbol{\mathcal{X}}^\mathrm{H}_{f,:,:}) & \mathop{\oplus}^{N_f}_{f=1}(\mathbf{I}_{N_r} \otimes (\boldsymbol{\mathcal{X}}^\mathrm{H}_{f,:,:} \boldsymbol{\mathcal{X}}_{f,:,:})) \end{bmatrix},
\end{align}
and the CRB on the estimation error is the inverse of FIM.
However, the FIM is always rank deficient as proved in Proposition~\ref{prop:FIM_rank_deficiency}, and consequently, the CRB cannot be computed.

\begin{proposition}[Rank Deficiency of $\boldsymbol{F}(\boldsymbol{x}, \boldsymbol{h})$]\label{prop:FIM_rank_deficiency}
The FIM $\boldsymbol{F}(\boldsymbol{x}, \boldsymbol{h}) \in\mathbb{C}^{N_fN_t(T+N_r)\times N_fN_t(T+N_r)}$ is always rank-deficient with $\operatorname{rank}(\boldsymbol{F}(\boldsymbol{x}, \boldsymbol{h})) \leq N_fN_t(T+N_r) - N_fN_t^2$.
\end{proposition}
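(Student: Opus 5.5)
The FIM in \eqref{eq:FOP_FSP} is block diagonal over subcarriers after a permutation of coordinates, so I will bound the rank one subcarrier at a time and then sum over the $N_f$ subcarriers.

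For subcarrier $f$, write $\boldsymbol{X}_f=\boldsymbol{\mathcal{X}}_{f,:,:}\in\mathbb{C}^{T\times N_t}$ and $\boldsymbol{H}_f=\boldsymbol{\mathcal{H}}_{f,:,:}\in\mathbb{C}^{N_t\times N_r}$. By \eqref{systemmodel0}, the noiseless observation is the product $\boldsymbol{M}_f=\boldsymbol{X}_f\boldsymbol{H}_f$. Up to the factor $2/\sigma_n^2$, the FIM restricted to $(\operatorname{vec}\boldsymbol{X}_f,\operatorname{vec}\boldsymbol{H}_f)$ is the Gram matrix $\boldsymbol{J}_f^{\mathrm H}\boldsymbol{J}_f$. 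Here $\boldsymbol{J}_f=[\,\boldsymbol{H}_f^\top\otimes\mathbf{I}_T,\ \mathbf{I}_{N_r}\otimes\boldsymbol{X}_f\,]$ is the Jacobian of the holomorphic map $(\boldsymbol{X}_f,\boldsymbol{H}_f)\mapsto\operatorname{vec}(\boldsymbol{X}_f\boldsymbol{H}_f)$. One can check directly that the blocks of \eqref{eq:FOP_FSP} are exactly these products, up to the ordering and conjugation conventions used there. Therefore $\operatorname{rank}$ of the $f$-th diagonal block equals $\operatorname{rank}(\boldsymbol{J}_f)=(T+N_r)N_t-\dim\ker\boldsymbol{J}_f$. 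It then suffices to exhibit $N_t^2$ linearly independent null vectors of $\boldsymbol{J}_f$.

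The null vectors come from the $GL(N_t)$ ambiguity $\boldsymbol{X}_f\boldsymbol{H}_f=(\boldsymbol{X}_f\boldsymbol{A}^{-1})(\boldsymbol{A}\boldsymbol{H}_f)$. Differentiating along $\boldsymbol{A}=\mathbf{I}+\epsilon\boldsymbol{E}$ for any $\boldsymbol{E}\in\mathbb{C}^{N_t\times N_t}$ gives the tangent direction $(\delta\boldsymbol{X},\delta\boldsymbol{H})=(-\boldsymbol{X}_f\boldsymbol{E},\ \boldsymbol{E}\boldsymbol{H}_f)$. It satisfies $\delta\boldsymbol{X}\,\boldsymbol{H}_f+\boldsymbol{X}_f\,\delta\boldsymbol{H}=0$, i.e. $\boldsymbol{J}_f[\operatorname{vec}\delta\boldsymbol{X};\operatorname{vec}\delta\boldsymbol{H}]=\boldsymbol{0}$. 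The map $\boldsymbol{E}\mapsto(-\boldsymbol{X}_f\boldsymbol{E},\boldsymbol{E}\boldsymbol{H}_f)$ is linear from an $N_t^2$-dimensional space.

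The main obstacle is showing this map is injective, so that the null space really has dimension at least $N_t^2$. If $\boldsymbol{X}_f$ has full column rank, which is the generic case since $T\ge N_t$ is needed for recoverability, then $\boldsymbol{X}_f\boldsymbol{E}=0$ forces $\boldsymbol{E}=0$. The same holds if $\boldsymbol{H}_f$ has full row rank. In the degenerate case where both fail, injectivity can break, so a separate argument is needed. Let $r=\operatorname{rank}(\boldsymbol{X}_f)$. The $\boldsymbol{X}$-block alone, $\boldsymbol{H}_f^\top\otimes\mathbf{I}_T$, has rank $\le T\operatorname{rank}(\boldsymbol{H}_f)$. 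More directly, $\operatorname{rank}(\boldsymbol{J}_f)\le \operatorname{rank}(\boldsymbol{H}_f^\top\otimes\mathbf{I}_T)+\operatorname{rank}(\mathbf{I}_{N_r}\otimes\boldsymbol{X}_f)=T\operatorname{rank}\boldsymbol{H}_f+N_r r$. I will try to show that this, combined with the kernel count for the restricted ambiguity group, still yields the bound $(T+N_r)N_t-N_t^2$ whenever the ranks drop. Failing a clean argument, I would bound the kernel dimension via lower semicontinuity: the rank of $\boldsymbol{J}_f$ is lower semicontinuous in $(\boldsymbol{X}_f,\boldsymbol{H}_f)$, so it can only decrease at degenerate points, and the generic bound therefore holds everywhere.

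Summing over $f$ gives $\operatorname{rank}\boldsymbol{F}\le\sum_f\big((T+N_r)N_t-N_t^2\big)=N_fN_t(T+N_r)-N_fN_t^2$.
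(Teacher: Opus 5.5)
Your proposal is correct and follows the paper's argument. With $\boldsymbol{E}=\boldsymbol{e}_\kappa\boldsymbol{e}_\ell^\top$, your tangent directions $(-\boldsymbol{X}_f\boldsymbol{E},\boldsymbol{E}\boldsymbol{H}_f)$ are, up to sign, exactly the paper's null vectors $\boldsymbol{\omega}^{(f,\kappa,\ell)}$, counted $N_t^2$ per subcarrier and summed over $f$; your Gram-matrix reduction $\boldsymbol{F}\propto\boldsymbol{J}^{\mathrm H}\boldsymbol{J}$ is just a tidier way to verify $\boldsymbol{F}\boldsymbol{\omega}=\mathbf{0}$.

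Your handling of independence is sharper than the paper's, whose condition of no all-zero rows or columns does not suffice (e.g.\ $N_t=2$, $\boldsymbol{X}_f=[\boldsymbol{u},\boldsymbol{u}]$, $\boldsymbol{H}_f$ with two equal rows, $\boldsymbol{E}=\left[\begin{smallmatrix}1&-1\\-1&1\end{smallmatrix}\right]$). Your lower-semicontinuity step is a complete argument at such degenerate points, not merely a fallback, provided $\max(T,N_r)\ge N_t$ so that the points where the map $\boldsymbol{E}\mapsto(-\boldsymbol{X}_f\boldsymbol{E},\boldsymbol{E}\boldsymbol{H}_f)$ is injective are dense. That assumption is genuinely needed: for $T,N_r<N_t$ the generic rank is $TN_r>N_t(T+N_r-N_t)$, and the stated bound fails.
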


\begin{proof}
See Appendix~\ref{Appendix2}.
\end{proof}

\begin{remark}[Interpretation of Rank Deficiency]\label{Identifiability}
The rank deficiency of $\boldsymbol{F}(\boldsymbol{x}, \boldsymbol{h})$ quantified in Proposition~\ref{prop:FIM_rank_deficiency} stems from the non-uniqueness of matrix factorization \cite{jiang2024generalized, sidiropoulos2017tensor},
as the equality $\boldsymbol{\mathcal{X}}_{f,:,:}\boldsymbol{\mathcal{H}}_{f,:,:} = (\boldsymbol{\mathcal{X}}_{f,:,:}\boldsymbol{\Pi})( \boldsymbol{\Pi}^{\top}\boldsymbol{\mathcal{H}}_{f,:,:})$ holds for any permutation matrix $\boldsymbol{\Pi}$ that reorders the columns of $\boldsymbol{\mathcal{X}}_{f,:,:}$ and rows of $\boldsymbol{\mathcal{H}}_{f,:,:}$. 
Such non-uniqueness is challenging to resolve when the elements of the codeword are i.i.d., as $\boldsymbol{\mathcal{X}}_{f,:,:}$ and $\boldsymbol{\mathcal{X}}_{f,:,:}\boldsymbol{\Pi}$ follow identical distributions, thereby rendering them statistically indistinguishable.
This non-uniqueness can be resolved by exploiting the prior information. This includes:
\begin{itemize}
    \item \textbf{Deterministic pilots} \cite{mashhadi2021pruning, han2025interference}: 
    The pilots provide known reference symbols that anchor the scale, phase, and permutation of $\boldsymbol{\mathcal{X}}$, effectively acting as deterministic priors.
    \item \textbf{Statistical priors} \cite{jiang2024generalized, zhang2018blind}: In a Bayesian framework, assuming that $\boldsymbol{\mathcal{X}}$ and $\boldsymbol{\mathcal{H}}$ follows informative distributions, the Bayesian FIM can be full-rank even without pilots.
\end{itemize}
\end{remark}

To achieve a full-rank FIM, we consider the prior information of the channel and signal to obtain the BFIM and the BCRB. The BFIM is defined as $\boldsymbol{F}_B(\boldsymbol{x}, \boldsymbol{h}) = \mathbb{E}_{p(\boldsymbol{x}, \boldsymbol{h})} \left[ \boldsymbol{F}(\boldsymbol{x}, \boldsymbol{h}) \right] +  \boldsymbol{F}_{P}(\boldsymbol{x}, \boldsymbol{h})$,
where the first term is the expectation of FIM in \eqref{eq:FOP_FSP}, and the second term is the prior FIM.
Since priors of the channel and transmit signal are independent, $\boldsymbol{F}_{P}(\boldsymbol{x}, \boldsymbol{h}) = \operatorname{Diag}[\boldsymbol{F}_{P, \boldsymbol{x}},\boldsymbol{F}_{P, \boldsymbol{h}}]$ is block-diagonal, where $\boldsymbol{F}_{P, \boldsymbol{x}} = \mathbb{E}_{p(\boldsymbol{\mathcal{X}})}[\nabla_{\boldsymbol{x}^*} \log p(\boldsymbol{\mathcal{X}})\nabla_{\boldsymbol{x}^*} \log p(\boldsymbol{\mathcal{X}})^\mathrm{H}]$ and $\boldsymbol{F}_{P, \boldsymbol{h}} = \mathbb{E}_{p(\boldsymbol{\mathcal{H}})}[\nabla_{\boldsymbol{h}^*} \log p(\boldsymbol{\mathcal{H}})\nabla_{\boldsymbol{h}^*} \log p(\boldsymbol{\mathcal{H}})^\mathrm{H}]$. By incorporating the statistical priors of $\boldsymbol{\mathcal{H}}$ and $\boldsymbol{\mathcal{X}}$, the BFIM $\boldsymbol{F}_B(\boldsymbol{x}, \boldsymbol{h})$ can achieve full rank.
A valid BCRB can then be obtained via the BFIM.
However, in practice, the channel $\boldsymbol{\mathcal{H}}$ and the transmit signal $\boldsymbol{\mathcal{X}}$ typically lie on low-dimensional manifolds. Consequently, their density is zero almost everywhere~\cite{song2019generative}. 
Under such circumstances, scores $\nabla\log p(\boldsymbol{\mathcal{H}})$ and $\nabla\log p(\boldsymbol{\mathcal{X}})$ are unbounded~\cite{song2019generative}.
To circumvent this issue, we consider the slightly noised variables on the OT path, i.e., $\boldsymbol{\mathcal{H}}(\epsilon)$ and $\boldsymbol{\mathcal{X}}(\epsilon)$,
whose distributions are absolutely continuous with a small $\epsilon > 0$. Based on \eqref{eq:vel_score_relation}, we can compute the scores of $\boldsymbol{\mathcal{H}}(\epsilon)$ and $\boldsymbol{\mathcal{X}}(\epsilon)$, and the resulting BFIM inverse is ${\boldsymbol{F}}_B(\boldsymbol{x}, \boldsymbol{h})^{-1}= \begin{bmatrix} {\boldsymbol{C}}_{xx} & {\boldsymbol{C}}_{xh} \\ {\boldsymbol{C}}_{hx} & {\boldsymbol{C}}_{hh} \end{bmatrix}$.
Then the BCRBs for the NMSEs of estimates $\hat{\boldsymbol{\mathcal{H}}}$ and $\hat{\boldsymbol{\mathcal{X}}}$ are
\begin{subequations}
\begin{align}
    \text{BCRB}_{\mathcal{H}} &= \frac{\operatorname{tr}( {\boldsymbol{C}}_{hh} )}{\mathbb{E}\left[\|\boldsymbol{\mathcal{H}}\|_F^2 \right] } \leq \mathbb{E}\left[\frac{ \|\hat{\boldsymbol{\mathcal{H}}} - \boldsymbol{\mathcal{H}}\|_F^2}{\|\boldsymbol{\mathcal{H}}\|_F^2} \right], \label{eq:MSE_H} \\
    \text{BCRB}_{\mathcal{X}} &= \frac{\operatorname{tr}( {\boldsymbol{C}}_{xx} )}{\mathbb{E}\left[\|\boldsymbol{\mathcal{X}}\|_F^2 \right] } \leq \mathbb{E}\left[\frac{ \|\hat{\boldsymbol{\mathcal{X}}} - \boldsymbol{\mathcal{X}}\|_F^2}{\|\boldsymbol{\mathcal{X}}\|_F^2} \right]. \label{eq:MSE_D}
\end{align}
\end{subequations}

The approximated score functions involve two main sources of approximation errors: 1) The errors between the learned scores and the true scores of $p(\boldsymbol{\mathcal{H}}(\epsilon))$ and $p(\boldsymbol{\mathcal{X}}(\epsilon))$, which are bounded as $\mathbb{E} \left[\epsilon \| \nabla \log q_{\boldsymbol{\theta}_{\mathcal{H}}}(\boldsymbol{\mathcal{H}}(\epsilon)) - \nabla \log p(\boldsymbol{\mathcal{H}}(\epsilon)) \|_2 \right] \leq \delta_{\mathcal{H}}$ and $\mathbb{E} \left[\epsilon \| \nabla \log q_{\boldsymbol{\theta}_{\mathcal{H}}}(\boldsymbol{\mathcal{X}}(\epsilon)) - \nabla \log p(\boldsymbol{\mathcal{X}}(\epsilon)) \|_2 \right] \leq \delta_{\mathcal{X}}$,
where $\delta_{\mathcal{H}}$ and $\delta_{\mathcal{X}}$ are directly controlled by the final training loss of flow-matching.
2) The errors between the $\epsilon$-smoothed scores and the prior scores, which are bounded as
    $\left\| P_{\mathcal{T}_{\boldsymbol{h}}} ( \nabla \log p(\boldsymbol{\mathcal{H}}(\epsilon))) - P_{\mathcal{T}_{\boldsymbol{h}}}(\nabla \log p(\boldsymbol{\mathcal{H}})) \right\|_2 \leq \epsilon C_h$ and
    $\left\| P_{\mathcal{T}_{\boldsymbol{x}}}( \nabla \log p(\boldsymbol{\mathcal{X}}(\epsilon))) - P_{\mathcal{T}_{\boldsymbol{x}}}(\nabla \log p(\boldsymbol{\mathcal{X}})) \right\|_2 \leq \epsilon C_x$,
where $P_{\mathcal{T}_{\boldsymbol{h}}}$ and $P_{\mathcal{T}_{\boldsymbol{x}}}$ respectively denote the projection onto the tangent spaces of the low-dimensional channel and signal manifolds, and $C_h$ and $C_x$ are finite constants depending on the smoothness of the prior distributions. 
For detailed derivations, please refer to Appendix~\ref{Appendix:Error-Bounds}. 
As training converges ($\delta_{\mathcal{H}}, \delta_{\mathcal{X}} \to 0$) and smoothing is annealed ($\epsilon \to 0$), the approximated scores converge to the prior scores.

\section{Numerical Results}\label{Section:NR}


In simulations, we consider the UMa-NLOS scenario \cite{tr385g} for channel generation.
The receiver is equipped with a dual-polarized uniform planar array with half-wavelength antenna spacing, consisting of $4$ vertical elements and $16$ horizontal elements. Each array element comprises a co-located cross-polarized antenna pair, resulting in a total of $N_r=128$ receive antennas. The number of transmitters is $N_t=4$.
The subcarrier spacing is $30$~kHz, the number of subcarriers is $N_f=74$, the carrier frequency is $6.7$~GHz, the blocklength is $T=14$, and the Doppler shift is $15$ Hz. 
The training set consists of 50,000 samples, and the evaluation set includes 5,000 samples.
The source dataset is the FFHQ-256 dataset, which consists of 70,000 high-quality images of human faces at $256\times 256$ resolution. The training set includes 60,000 image samples, and the remaining 10,000 images are for validation.

The quality of image reconstruction is quantified through PSNR, SSIM, and LPIPS, denoted as $\text{PSNR}_{\mathcal{S}}$, $\text{SSIM}_{\mathcal{S}}$, and $\text{LPIPS}_{\mathcal{S}}$, respectively.
The channel estimation performance is measured via $\text{NMSE}_{\mathcal{H}} = \Vert  \boldsymbol{\mathcal{H}}-\hat{\boldsymbol{\mathcal{H}}}\Vert^2_F/\Vert  \boldsymbol{\mathcal{H}}\Vert^2_F$.
The NMSE of the estimated transmit signal is $\text{NMSE}_{\mathcal{X}} = \Vert  \boldsymbol{\mathcal{X}}-\hat{\boldsymbol{\mathcal{X}}}\Vert^2_F/\Vert  \boldsymbol{\mathcal{X}}\Vert^2_F$, where $\hat{\boldsymbol{\mathcal{X}}}$ is obtained by inputting $\{\hat{\boldsymbol{\mathcal{S}}}_k\}^{N_t}_{k=1}$ into the DU encoders.
The channel bandwidth ratio is defined as $\text{CBR} \triangleq \frac{N_fT_{\mathcal{S}}}{n}$ and $n=256\times 256\times 3$ is the source data dimension.
The channel quality is defined as $\text{CSNR} = \Vert  \boldsymbol{\mathcal{Y}}-\boldsymbol{\mathcal{W}}\Vert^2_F/\Vert  \boldsymbol{\mathcal{W}}\Vert^2_F$.
\begin{figure}[htb]
    \centering
    \includegraphics[width=1\linewidth]{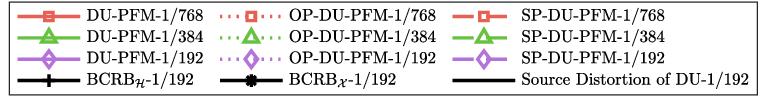}
    \includegraphics[width=0.323\linewidth]{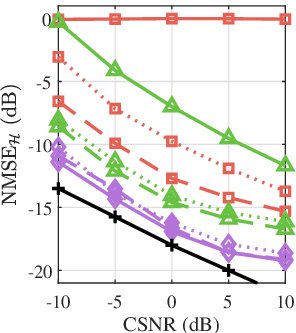}
    \includegraphics[width=0.323\linewidth]{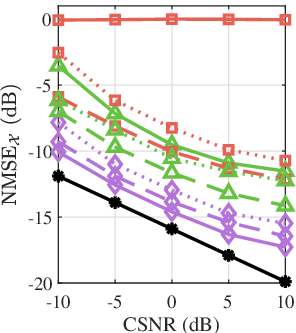}
    \includegraphics[width=0.323\linewidth]{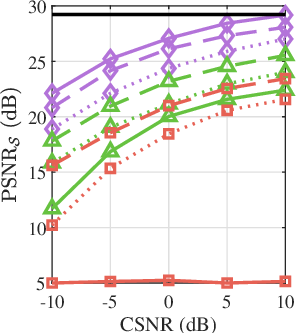}
    \caption{$\text{NMSE}_{\mathcal{H}}$, $\text{NMSE}_{\mathcal{X}}$, and $\text{PSNR}_{\mathcal{S}}$ of the DU-PFM framework under various CBRs and CSNRs. The pilot ratio for OP/SP-DU-PFM is $\alpha=0.5$.}
    \label{fig:NMSE_PSNR_vs_CBR}
\end{figure}

We first evaluate the performance of the proposed framework with a varying CBR.
Fig.~\ref{fig:NMSE_PSNR_vs_CBR} compares DU-PFM with its pilot-assisted counterparts under different CBRs and CSNRs, with $\alpha$ set to $0.5$ for OP-DU-PFM and SP-DU-PFM. 
\textbf{At a low CBR of 1/768} (corresponding to $256$ transmit symbols per image), DU-PFM fails to reliably recover either the channel or the source data.
This is because, at such a low CBR level, the DU encoder generates highly compressed codewords, where the entries of a codeword are close to statistically independent. Consequently, the limited redundancy in a codeword is insufficient to resolve the non-uniqueness issue as discussed in Remark~\ref{Identifiability}.
In this case, extra redundancy (i.e., pilots) needs to be introduced to eliminate the reconstruction uncertainty. Particularly, SP-DU-PFM exhibits a clear performance advantage over OP-DU-PFM, primarily because the latter allocates only half of the total OFDM symbols to source transmission, thereby incurring substantial intrinsic compression loss during encoding.
\textbf{At CBR=1/384}, more redundancy is introduced in a DU codeword.
Consequently, DU-PFM benefits from enhanced codeword redundancy and can alleviate both channel and source uncertainties without explicit pilots.
Nevertheless, its performance remains inferior to its pilot-assisted counterparts.
\textbf{At CBR=1/192}, DU-PFM without explicit pilots outperforms both OP-DU-PFM and SP-DU-PFM in terms of $\text{NMSE}_{\mathcal{H}}$, $\text{NMSE}_{\mathcal{X}}$, and $\text{PSNR}_{\mathcal{S}}$. This superiority stems from the fact that the pilot overhead in pilot-assisted schemes significantly reduces the available resources for source transmission, whereas DU-PFM effectively leverages the codeword redundancy that is enough to maintain high estimation accuracy.

We also include the approximated BCRBs derived in Section~\ref{sec:theory} as theoretical references in Fig.~\ref{fig:NMSE_PSNR_vs_CBR}. As observed, the gap between the approximated BCRB and the DU-PFM approach at CBR$=1/192$ widens in the high-CSNR regime. This divergence occurs because the system performance hits an error floor determined by the intrinsic compression loss of the DU encoder. As observed, the $\text{PSNR}_{\mathcal{S}}$ curve of DU-PFM converges to the horizontal black line (representing the source distortion of DU encoder at CBR~$=1/192$, i.e., the average PSNR of $\{g_{\boldsymbol{\beta}_k}(f_{\boldsymbol{\gamma}_k}(\boldsymbol{\mathcal{S}}_{k}))\}^{N_t}_{k=1}$ in \eqref{eq:DUcodec}), confirming that the performance is bottlenecked by lossy compression.

We now turn to the computational complexity analysis. For generative learning based methods, complexity is largely determined by the number of neural function evaluations (NFE).
Fig.~\ref{fig:NFE} compares the estimation performance of DU-PFM with DU-PRMP and DU-PDPS, which respectively employ diffusion-based parallel reverse mean propagation (PRMP) \cite{jiang2025blind} and parallel DPS (PDPS) \cite{chung2023parallel} for joint channel and source estimation with the proposed DU encoder, under different NFEs.
Remarkably, $\text{PSNR}_{\mathcal{S}}$ and $\text{NMSE}_{\mathcal{H}}$ achieved by DU-PFM with 50 NFEs are both better than those achieved by DU-PDPS with 1000 NFEs and DU-PRMP with 300 NFEs.
\begin{figure}[htb]
    \centering
    \includegraphics[width=0.8\linewidth]{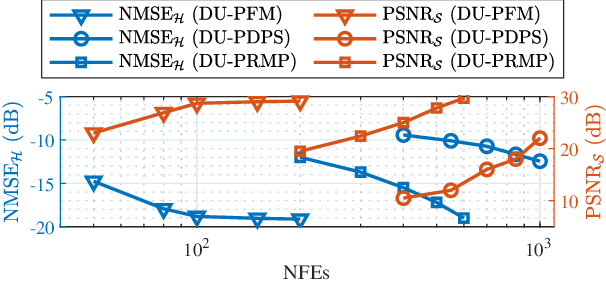}
    \caption{Tradeoff between the estimation accuracy and the computational complexity in terms of NFE, where CSNR~$=10$~dB and CBR~$=1/192$.}
    \label{fig:NFE}
\end{figure}




To provide a comprehensive evaluation, DU-PFM is compared with the following methods.
\textbf{BPG-Capacity:} An idealized reference, assuming capacity-achieving channel codes and using BPG as the source codec \cite{bellard2015bpg}; \textbf{DSC-Capacity:} The second idealized reference, assuming capacity-achieving channel codes and using a DNN-based source codec (DSC) \cite{mentzer2020high}; \textbf{BPG-LDPC:} This scheme uses the BPG codec for source compression, 5G LDPC codes with a block length of 4096 for channel coding, and modulation formats compliant with 3GPP TS 38.214 \cite{tr385g}. Pilots are used for channel estimation.
\textbf{DJSCC-MIMO:} The DJSCC-MIMO algorithm \cite{wu2024deep} with pilot-based channel estimation; \textbf{DPS-MIMO:} This scheme uses DJSCC as the codec, and source recovery is achieved via DPS \cite{chung2023diffusion} with pilot-assisted channel estimation.
Note that the pilot-based channel estimation in the last three baselines is achieved by the linear MMSE estimation with the spatial covariance of the MIMO-OFDM channel assumed to be \textit{a priori} known.
\begin{table}[htb]
\renewcommand{\arraystretch}{0.5}
\setlength{\tabcolsep}{2pt}
\centering
\caption{Quantitative results of DU-PFM and baselines at $\text{CSNR}=0$ dB. }
\label{CDL}
\begin{tabular}{l  |c c c  |c  |c}
    \toprule
    Methods
    & PSNR$_{\mathcal{S}}$& SSIM$_{\mathcal{S}}$ & LPIPS$_{\mathcal{S}}$ & NMSE$_{\mathcal{H}}$& CBR\\
    \midrule \rowcolor{gray!40}
    \textbf{DU-PFM (Ours)} & \textbf{30.05} & \textbf{0.8562}& \textbf{0.1495} & \textbf{-18.41} & \textbf{0.0234}  \\
    \midrule
    BPG-Capacity & 30.02 & 0.8496 & 0.2458 & N/A & 0.0820  \\
    DSC-Capacity & 30.03  & 0.8510 & 0.1637 & N/A & 0.0234  \\
    \midrule
    BPG-LDPC & 29.56 & 0.8343 & 0.2658 & -14.26 & 0.1025  \\
    DJSCC-MIMO & 29.71 & 0.8056 & 0.2213 & -14.22 & 0.0703  \\
    DPS-MIMO & 29.75 & 0.8399 & 0.1573 & -14.27 & 0.0469  \\
    \bottomrule
\end{tabular}
\end{table}

To ensure a fair comparison, we maintain approximately the same $\text{PSNR}_{\mathcal{S}}$ at~$30$~dB under CSNR~$=0$~dB. The quantitative results are shown in Table~\ref{CDL}.
\subsubsection{Performance Comparison with Idealized Transmission Schemes}
DU-PFM attains the target $\text{PSNR}_{\mathcal{S}}$ with a substantially lower CBR of $0.0234$, corresponding to only \textbf{28.5\%} of the CBR required by BPG-Capacity.
Compared with DSC-Capacity, DU-PFM consistently achieves a lower $\text{LPIPS}_{\mathcal{S}}$, indicating superior perceptual quality.
These gains stem from:
\begin{itemize}
\item DU encoder: 
The proposed DU encoder enables more efficient semantic feature representation than BPG.
\item PFM decoder: Our PFM decoder leverages generative priors via flow models to synthesize realistic textures, offering better perceptual quality.
\end{itemize}
\subsubsection{Performance Comparison with Classical Schemes}
The CBR of DU-PFM is only \textbf{22.8\%} of that required by BPG-LDPC.
Besides the gain from the DU encoder and the PFM decoder, the superiority of DU-PFM over standard BPG-LDPC is attributed to two additional factors:
\begin{itemize}
\item Gaussian signaling gain: Continuous Gaussian signaling avoids modulation constraints in standard schemes.
\item Joint channel and source estimation: Leveraging the data signal to assist channel estimation, DU-PFM can achieve superior $\text{NMSE}_{\mathcal{H}}$ compared to pilot-only estimation.
\end{itemize}
\subsubsection{Performance Comparison with DNN-based Schemes}
Compared to DNN-based methods, DU-PFM demonstrates higher spectral efficiency and lower $\text{NMSE}_{\mathcal{H}}$.
Notably, to compensate for channel estimation inaccuracies, DJSCC-MIMO requires significantly more redundancy in its encoder design, resulting in a CBR three times that of DU-PFM. 
DPS-MIMO leverages a powerful diffusion model-based decoder, but its performance is restricted by the separate estimation strategy.

\section{Conclusions}\label{conclusion}
We proposed a joint channel sounding and source-channel coding framework for block-fading MIMO-OFDM systems, featuring DU encoders and a PFM decoder. 
Theoretically, we derived the BCRB to benchmark the performance limits. 
Extensive simulations demonstrated the superiority of the DU-PFM approach over state-of-the-art methods across various metrics with a drastically reduced CBR and inference time.

\appendices
\section{Proof of Proposition~\ref{prop:FIM_rank_deficiency}}\label{Appendix2}
We prove the rank deficiency by explicitly constructing $N_f N_t^2$ linearly independent vectors that lie in the null space of $\boldsymbol{F}(\boldsymbol{\phi})$.
Let $\boldsymbol{\omega} \in \mathbb{C}^{N_fN_t(T+N_r)}$ be a candidate vector structured as $\boldsymbol{\omega} = [\operatorname{vec}(\boldsymbol{\mathcal{A}})^\top, \operatorname{vec}(\boldsymbol{\mathcal{B}})^\top]^\top$, where $\boldsymbol{\mathcal{A}} \in \mathbb{C}^{N_f\times T \times N_t}$ and $\boldsymbol{\mathcal{B}} \in \mathbb{C}^{N_f\times N_t \times N_r}$.
We construct a set of linearly independent vectors $\{\boldsymbol{\omega}^{(f,\kappa,\ell)}\}$ indexed by the triplet $(f, \kappa, \ell)$, where $f \in \mathcal{I}_{N_f} = \{1,\dots,N_f\}$ and $\kappa, \ell \in \mathcal{I}_{N_t}=\{1, \dots, N_t\}$. For a specific triplet, let the components of $\boldsymbol{\omega}^{(f,\kappa,\ell)}$ be defined as:
\begin{subequations}
\begin{align}\label{eq:omega}
\boldsymbol{\mathcal{A}}_{f,:,:} ={}& \boldsymbol{\mathcal{X}}_{f,:,\kappa} \boldsymbol{e}_{\ell}^\top\\
\boldsymbol{\mathcal{B}}_{f,:,:} ={}& - \boldsymbol{e}_{\kappa}\boldsymbol{\mathcal{H}}_{f,\ell,:}\\
\boldsymbol{\mathcal{A}}_{f^\prime,:,:} ={}& \mathbf{0},\quad\forall f^\prime\in \mathcal{I}_{N_f}\backslash f\\
\boldsymbol{\mathcal{B}}_{f^\prime,:,:} ={}& \mathbf{0},\quad\forall f^\prime\in \mathcal{I}_{N_f}\backslash f,
\end{align}
\end{subequations}
where $\boldsymbol{\mathcal{X}}_{f,:,\kappa}$ is the $\kappa$-th column of $\boldsymbol{\mathcal{X}}_{f,:,:}$, $\boldsymbol{\mathcal{H}}_{f,\ell,:}$ is the $\ell$-th row of $\boldsymbol{\mathcal{H}}_{f,:,:}$, and $\boldsymbol{e}_i\in\mathbb{C}^{N_t}$ denotes the standard unit vector with the $i$-th entry being 1 and all other entries being 0.
Then the FIM satisfies
\begin{align}
&\sigma_n^2\boldsymbol{F}(\boldsymbol{\phi})\boldsymbol{\omega}^{(f,\kappa,\ell)}\nonumber \\
={}& 
\begin{bmatrix} 
\operatorname{vec}(\boldsymbol{\mathcal{A}}_{1,:,:}\boldsymbol{\mathcal{H}}_{1,:,:} \boldsymbol{\mathcal{H}}^\mathrm{H}_{1,:,:}+ \boldsymbol{\mathcal{X}}_{1,:,:} \boldsymbol{\mathcal{B}}_{1,:,:} \boldsymbol{\mathcal{H}}^\mathrm{H}_{1,:,:}) \\ 
\vdots
\\
\operatorname{vec}(\boldsymbol{\mathcal{A}}_{f,:,:}\boldsymbol{\mathcal{H}}_{f,:,:} \boldsymbol{\mathcal{H}}^\mathrm{H}_{f,:,:}+ \boldsymbol{\mathcal{X}}_{f,:,:} \boldsymbol{\mathcal{B}}_{f,:,:} \boldsymbol{\mathcal{H}}^\mathrm{H}_{f,:,:}) \\ 
\vdots
\\
\operatorname{vec}(\boldsymbol{\mathcal{A}}_{N_f,:,:}\boldsymbol{\mathcal{H}}_{N_f,:,:} \boldsymbol{\mathcal{H}}^\mathrm{H}_{N_f,:,:}+ \boldsymbol{\mathcal{X}}_{N_f,:,:} \boldsymbol{\mathcal{B}}_{N_f,:,:} \boldsymbol{\mathcal{H}}^\mathrm{H}_{N_f,:,:}) \\
\operatorname{vec}(\boldsymbol{\mathcal{X}}^\mathrm{H}_{1,:,:}\boldsymbol{\mathcal{A}}_{1,:,:} \boldsymbol{\mathcal{H}}_{1,:,:}+  \boldsymbol{\mathcal{X}}^\mathrm{H}_{1,:,:} \boldsymbol{\mathcal{X}}_{1,:,:}\boldsymbol{\mathcal{B}}_{1,:,:})\\
\vdots\\
\operatorname{vec}(\boldsymbol{\mathcal{X}}^\mathrm{H}_{f,:,:}\boldsymbol{\mathcal{A}}_{f,:,:} \boldsymbol{\mathcal{H}}_{f,:,:}+  \boldsymbol{\mathcal{X}}^\mathrm{H}_{f,:,:} \boldsymbol{\mathcal{X}}_{f,:,:}\boldsymbol{\mathcal{B}}_{f,:,:})\\
\vdots\\
\operatorname{vec}(\boldsymbol{\mathcal{X}}^\mathrm{H}_{N_f,:,:}\boldsymbol{\mathcal{A}}_{N_f,:,:} \boldsymbol{\mathcal{H}}_{N_f,:,:}+  \boldsymbol{\mathcal{X}}^\mathrm{H}_{N_f,:,:} \boldsymbol{\mathcal{X}}_{N_f,:,:}\boldsymbol{\mathcal{B}}_{N_f,:,:})
\end{bmatrix}\nonumber\\
={}&
\begin{bmatrix} 
\mathbf{0} \\ 
\vdots
\\
\operatorname{vec}\big( (\boldsymbol{\mathcal{X}}_{f,:,\kappa} \boldsymbol{\mathcal{H}}_{f,\ell,:} - \boldsymbol{\mathcal{X}}_{f,:,\kappa} \boldsymbol{\mathcal{H}}_{f,\ell,:}) \boldsymbol{\mathcal{H}}^{\mathrm{H}}_{f,:,:} \big) \\ 
\vdots
\\
\mathbf{0} \\
\mathbf{0}\\
\vdots\\
\operatorname{vec}\big( \boldsymbol{\mathcal{X}}^{\mathrm{H}}_{f,:,:}(\boldsymbol{\mathcal{X}}_{f,:,\kappa} \boldsymbol{\mathcal{H}}_{f,\ell,:} - \boldsymbol{\mathcal{X}}_{f,:,\kappa} \boldsymbol{\mathcal{H}}_{f,\ell,:}) \big)\\
\vdots\\
\mathbf{0}
\end{bmatrix}\nonumber\\
={}& \mathbf{0}.
\end{align}
Finally, we count the number of such vectors. There are $N_f$ choices for $f$, and $N_t^2$ choices for the pair $(\kappa, \ell)$, yielding a total of $N_f N_t^2$ vectors.
These vectors are linearly independent because each $\boldsymbol{\omega}^{(f,\kappa,\ell)}$ is constructed using orthogonal standard basis vectors $\boldsymbol{e}_{\ell}$ and $\boldsymbol{e}_{\kappa}$, provided that $\boldsymbol{\mathcal{X}}$ and $\boldsymbol{\mathcal{H}}$ do not contain all-zero columns or rows (a condition satisfied almost surely in practical MIMO-OFDM scenarios).
Consequently, the nullity of $\boldsymbol{F}(\boldsymbol{\phi})$ is at least $N_fN_t^2$, which implies that the rank of $\boldsymbol{F}(\boldsymbol{\phi})$ is upper bounded by its total dimension $N_fN_t(T+N_r)$ minus $N_f N_t^2$.

\section{Derivation of the errors between the $\epsilon$-smoothed scores and the true prior scores}
\label{Appendix:Error-Bounds}

To rigorously quantify the approximation error between the $\epsilon$-smoothed scores and the true prior scores, we explicitly account for the low-dimensional geometric structure of the priors.
We assume that the prior distributions of the channel and signal are supported on low-dimensional smooth manifolds, denoted by $\mathcal M_h$ and $\mathcal M_x$, respectively.

\textit{a) Tangent-space projection.}
Since the prior scores are well-defined strictly within the tangent spaces of low-dimensional manifolds, we restrict our error analysis to the scores projected to the tangent spaces \cite{song2019generative}.
Let $\mathcal{T}_{\boldsymbol{h}}$ and $\mathcal{T}_{\boldsymbol{x}}$ denote the tangent spaces of $\mathcal M_h$ and $\mathcal M_x$ at the ground truth points $\boldsymbol{\mathcal{H}}\in\mathcal M_h$ and $\boldsymbol{\mathcal{X}}\in\mathcal M_x$.
Let $P_{\mathcal{T}_{\boldsymbol{h}}}$ and $P_{\mathcal{T}_{\boldsymbol{x}}}$ be the corresponding orthogonal projection operators onto these tangent spaces.

\textit{b) Lipschitz regularity of the scores.}
We assume that the projected score functions are locally Lipschitz continuous in a neighborhood of the ground truth \cite{song2019generative}. Specifically, there exist Lipschitz constants $L_h, L_x > 0$ such that:
\begin{subequations}\label{eq:Append-nabla}
\begin{align}
&\left\|
P_{\mathcal{T}_{\boldsymbol{h}}}
\left(
\nabla \log p(\boldsymbol{\mathcal{H}}(\epsilon)))
-
P_{\mathcal{T}_{\boldsymbol{h}}}(\nabla \log p(\boldsymbol{\mathcal{H}})
\right)
\right\|_2 
\nonumber\\\le{}&
L_h
\big\|
\boldsymbol{\mathcal{H}}(\epsilon)-\boldsymbol{\mathcal{H}}
\big\|_2 \\
={}&
L_h
\big\|
\epsilon\boldsymbol{\mathcal{H}}(1)-\epsilon\boldsymbol{\mathcal{H}}
\big\|_2 \label{OTApp1}\\
={}&
\epsilon C_h, 
\label{eq:Append-nabla-h}\\
&\left\|
P_{\mathcal{T}_{\boldsymbol{x}}}
\left(
\nabla \log p(\boldsymbol{\mathcal{X}}(\epsilon)))
-
P_{\mathcal{T}_{\boldsymbol{x}}}(\nabla \log p(\boldsymbol{\mathcal{X}})
\right)
\right\|_2 \nonumber\\
\le{}&
L_x
\big\|
\boldsymbol{\mathcal{X}}(\epsilon) - \boldsymbol{\mathcal{X}}
\big\|_2\\
={}&
L_x
\big\|
\epsilon\boldsymbol{\mathcal{X}}(1)-\epsilon\boldsymbol{\mathcal{X}}
\big\|_2 \label{OTApp2}\\
={}&
\epsilon C_x,
\label{eq:Append-nabla-x}
\end{align}
\end{subequations}
where $C_h = L_h
\big\|
\boldsymbol{\mathcal{H}}(1)-\boldsymbol{\mathcal{H}}
\big\|_2$ and
$C_x = L_x \|\boldsymbol{\mathcal{X}}(1)-\boldsymbol{\mathcal{X}}\|_2$ are finite constants depending on the Lipschitz constants as well as the differences between the ground truth ($\boldsymbol{\mathcal{H}}$ and $\boldsymbol{\mathcal{X}}$) and the initial flow variables ($\boldsymbol{\mathcal{H}}(1)$ and $\boldsymbol{\mathcal{X}}(1)$). Equations \eqref{OTApp1} and \eqref{OTApp2} stem from the definition of the flow variables.
Consequently, the approximation errors between the $\epsilon$-smoothed scores and the true prior scores vanish as $\epsilon\to 0$.

\bibliographystyle{IEEEtran}
\bibliography{PFM.bib}

\begin{thebibliography}{10}
\providecommand{\url}[1]{#1}
\csname url@samestyle\endcsname
\providecommand{\newblock}{\relax}
\providecommand{\bibinfo}[2]{#2}
\providecommand{\BIBentrySTDinterwordspacing}{\spaceskip=0pt\relax}
\providecommand{\BIBentryALTinterwordstretchfactor}{4}
\providecommand{\BIBentryALTinterwordspacing}{\spaceskip=\fontdimen2\font plus
\BIBentryALTinterwordstretchfactor\fontdimen3\font minus \fontdimen4\font\relax}
\providecommand{\BIBforeignlanguage}[2]{{%
\expandafter\ifx\csname l@#1\endcsname\relax
\typeout{** WARNING: IEEEtran.bst: No hyphenation pattern has been}%
\typeout{** loaded for the language `#1'. Using the pattern for}%
\typeout{** the default language instead.}%
\else
\language=\csname l@#1\endcsname
\fi
#2}}
\providecommand{\BIBdecl}{\relax}
\BIBdecl

\bibitem{wu2023fusion}
T.~Wu, Z.~Chen, M.~Tao, B.~Xia, and W.~Zhang, ``Fusion-based multi-user semantic communications for wireless image transmission over degraded broadcast channels,'' in \emph{GLOBECOM 2023-2023 IEEE Global Communications Conference}.\hskip 1em plus 0.5em minus 0.4em\relax IEEE, 2023, pp. 7623--7628.

\bibitem{wu2024deep}
H.~Wu, Y.~Shao, C.~Bian, K.~Mikolajczyk, and D.~G{\"u}nd{\"u}z, ``Deep joint source-channel coding for adaptive image transmission over {MIMO} channels,'' \emph{IEEE Trans. Wireless Commun.}, 2024.

\bibitem{gunduz2024joint}
D.~G{\"u}nd{\"u}z, M.~A. Wigger, T.-Y. Tung, P.~Zhang, and Y.~Xiao, ``Joint source--channel coding: Fundamentals and recent progress in practical designs,'' \emph{Proc. IEEE}, 2024.

\bibitem{yang2024swinjscc}
K.~Yang, S.~Wang, J.~Dai, X.~Qin, K.~Niu, and P.~Zhang, ``{SwinJSCC}: Taming swin transformer for deep joint source-channel coding,'' \emph{IEEE Trans. Cogn. Commun. Netw.}, 2024.

\bibitem{xu2025tdjscc}
M.~Xu, C.-T. Lam, Y.~Liang, B.~Ng, X.~Yuan, and S.-K. Im, ``{TDJSCC}: Low complexity and bandwidth efficient deep joint source-channel coding with {OFDM},'' \emph{IEEE Access}, 2025.

\bibitem{3gpp.38.201}
3GPP, ``{NR; Physical layer; General description},'' {3rd Generation Partnership Project (3GPP)}, TS 38.201, version 18.4.0, 09 2024.

\bibitem{deng2020ieee}
C.~Deng, X.~Fang, X.~Han, X.~Wang, L.~Yan, R.~He, Y.~Long, and Y.~Guo, ``{IEEE} 802.11 be {Wi-Fi} 7: New challenges and opportunities,'' \emph{IEEE Commun. Surv. Tutorials}, vol.~22, no.~4, pp. 2136--2166, 2020.

\bibitem{tr385g}
3GPP, ``{5G}; study on channel model for frequencies from 0.5 to 100 {GHz},'' {3rd Generation Partnership Project (3GPP)}, Tech. Rep., 2020.

\bibitem{rusek2012scaling}
F.~Rusek, D.~Persson, B.~K. Lau, E.~G. Larsson, T.~L. Marzetta, O.~Edfors, and F.~Tufvesson, ``Scaling up {MIMO}: Opportunities and challenges with very large arrays,'' \emph{IEEE Signal Process. Mag.}, vol.~30, no.~1, pp. 40--60, 2012.

\bibitem{kuai2019structured}
X.~Kuai, L.~Chen, X.~Yuan, and A.~Liu, ``Structured turbo compressed sensing for downlink massive {MIMO-OFDM} channel estimation,'' \emph{IEEE Trans. Wireless Commun.}, vol.~18, no.~8, pp. 3813--3826, 2019.

\bibitem{mashhadi2021pruning}
M.~B. Mashhadi and D.~G{\"u}nd{\"u}z, ``Pruning the pilots: Deep learning-based pilot design and channel estimation for {MIMO-OFDM} systems,'' \emph{IEEE Trans. Wireless Commun.}, vol.~20, no.~10, pp. 6315--6328, 2021.

\bibitem{han2025interference}
X.~Han, T.~Wenqiang, J.~Shi, L.~Wendong, S.~Jia, S.~Zhihua, and Z.~Zhi, ``Interference cancellation based neural receiver for superimposed pilot in multi-layer transmission,'' \emph{China Communications}, vol.~22, no.~1, pp. 75--88, 2025.

\bibitem{lipman2023flow}
Y.~Lipman, R.~T.~Q. Chen, H.~Ben-Hamu, M.~Nickel, and M.~Le, ``Flow matching for generative modeling,'' in \emph{Proc. Int. Conf. Learn. Repr. (ICLR)}, 2023.

\bibitem{tse2005fundamentals}
D.~Tse and P.~Viswanath, \emph{Fundamentals of wireless communication}.\hskip 1em plus 0.5em minus 0.4em\relax Cambridge university press, 2005.

\bibitem{cai2025end}
C.~Cai, X.~Yuan, and Y.-J.~A. Zhang, ``End-to-end learning for task-oriented semantic communications over {MIMO} channels: An information-theoretic framework,'' \emph{IEEE J. Sel. Areas Commun.}, vol.~43, no.~4, pp. 1292--1307, 2025.

\bibitem{jiang2025blind}
H.~Jiang, X.~Yuan, Y.~Huang, and Q.~Guo, ``Blind {MIMO} semantic communication via parallel variational diffusion: A completely pilot-free approach,'' \emph{arXiv preprint arXiv:2510.27043}, 2025.

\bibitem{liu2021swin}
Z.~Liu, Y.~Lin, Y.~Cao, H.~Hu, Y.~Wei, Z.~Zhang, S.~Lin, and B.~Guo, ``Swin transformer: Hierarchical vision transformer using shifted windows,'' in \emph{Proc. IEEE Int. Conf. Comput. Vis. (ICCV)}, 2021, pp. 10\,012--10\,022.

\bibitem{kim2025flowdps}
J.~Kim, B.~S. Kim, and J.~C. Ye, ``{FlowDPS}: Flow-driven posterior sampling for inverse problems,'' \emph{arXiv preprint arXiv:2503.08136}, 2025.

\bibitem{strang2000linear}
G.~Strang, \emph{Linear Algebra and Its Applications}.\hskip 1em plus 0.5em minus 0.4em\relax Thomson, Brooks/Cole, 2006.

\bibitem{chung2023diffusion}
H.~Chung, J.~Kim, M.~T. Mccann, M.~L. Klasky, and J.~C. Ye, ``Diffusion posterior sampling for general noisy inverse problems,'' in \emph{Proc. Int. Conf. Learn. Repr. (ICLR)}, 2023.

\bibitem{jiang2024generalized}
H.~Jiang, X.~Yuan, and Q.~Guo, ``Generalized bilinear factorization via hybrid vector message passing,'' \emph{IEEE Trans. Signal Process.}, vol.~72, pp. 5675--5690, 2024.

\bibitem{sidiropoulos2017tensor}
N.~D. Sidiropoulos, L.~De~Lathauwer, X.~Fu, K.~Huang, E.~E. Papalexakis, and C.~Faloutsos, ``Tensor decomposition for signal processing and machine learning,'' \emph{IEEE Trans. Signal Process.}, vol.~65, no.~13, pp. 3551--3582, 2017.

\bibitem{zhang2018blind}
J.~Zhang, X.~Yuan, and Y.-J.~A. Zhang, ``Blind signal detection in massive {MIMO}: Exploiting the channel sparsity,'' \emph{IEEE Trans. Commun.}, vol.~66, no.~2, pp. 700--712, 2017.

\bibitem{song2019generative}
Y.~Song and S.~Ermon, ``Generative modeling by estimating gradients of the data distribution,'' \emph{Adv. Neural Inf. Process. Syst. (NeurIPS)}, vol.~32, 2019.

\bibitem{chung2023parallel}
H.~Chung, J.~Kim, S.~Kim, and J.~C. Ye, ``Parallel diffusion models of operator and image for blind inverse problems,'' \emph{Proc. IEEE/CVF Conf. Comput. Vis. Pattern Recognit. (CVPR)}, 2023.

\bibitem{bellard2015bpg}
F.~Bellard, ``{BPG} image format,'' \emph{URL https://bellard.org/bpg}, 2015.

\bibitem{mentzer2020high}
F.~Mentzer, G.~D. Toderici, M.~Tschannen, and E.~Agustsson, ``High-fidelity generative image compression,'' \emph{Adv. Neural Inf. Process. Syst. (NeurIPS)}, vol.~33, pp. 11\,913--11\,924, 2020.

\end{thebibliography}

\end{document}